\documentclass[12pt]{article}
\usepackage{xcolor}
\usepackage[margin=1in]{geometry}
\usepackage{amsmath,amssymb,amsthm,microtype}
\usepackage[colorlinks=true,allcolors=blue,pagebackref]{hyperref}
\renewcommand*{\backref}[1]{}
\renewcommand*{\backrefalt}[4]{%
  \ifcase#1\relax
  \or (cit. on page~#2.)%
  \else (cit. on pages~#2.)%
  \fi
}
\usepackage[capitalize,noabbrev]{cleveref}
\usepackage{mathtools}

\newtheorem{theorem}{Theorem}[section]
\newtheorem{lemma}[theorem]{Lemma}
\newtheorem{corollary}[theorem]{Corollary}
\newcommand{\F}{\mathbb F_2}
\newcommand{\E}{\mathbb E}
\newcommand{\zo}{\{0,1\}}
\newcommand{\XMaj}{\mathsf{XMaj}}
\newcommand{\XMajkl}{\mathsf{Maj}_{\ell}^{\oplus k}}

\newcommand{\polylog}{\mathsf{polylog}}
\newcommand{\eps}{\varepsilon}

\DeclareMathOperator{\Maj}{Maj}
\DeclareMathOperator{\corr}{corr}

\theoremstyle{definition} %
\newtheorem{definition}{Definition} %

\title{Exponential Correlation Bounds for Polynomials}
\author{%
  \begin{tabular}{@{}c@{\qquad}c@{\qquad}c@{}}
    \normalsize Eshan Chattopadhyay%
    \thanks{\texttt{eshan@cs.cornell.edu}.
      Supported by NSF Award CCF-2514586.}
    &
    \normalsize Pooya Hatami%
    \thanks{\texttt{hatami.2@osu.edu}.}
    &
    \normalsize Chin Ho Lee%
    \thanks{\texttt{chinho.lee@ncsu.edu}.}
    \\[3pt]
    \small Cornell University
    & \small Ohio State University
    & \small NC State University
    \\[1.5em]
    \normalsize Shachar Lovett%
    \thanks{\texttt{slovett@ucsd.edu}.
      Supported by Simons Investigator Award \#929894
      and NSF Award CCF-2425349.}
    &
    \normalsize Avishay Tal%
    \thanks{\texttt{atal@berkeley.edu}.
    Supported by NSF CAREER award CCF-2145474.}
    &
    \normalsize Emanuele Viola%
    \thanks{\texttt{mathematicsoftheimpossible@gmail.com}. Supported by NSF award 2430026.}
    \\[3pt]
    \small UC San Diego
    & \small UC Berkeley
    & \small Northeastern University
  \end{tabular}%
}

\date{\today}
\hypersetup{pdftitle={Exponential correlation bounds for polynomials}}
\begin{document}
\maketitle
\begin{abstract}
    We prove that the XOR of $k$ majorities on disjoint blocks of \(\ell\) bits has correlation at most \((2d/\sqrt{\ell})^k\) with every degree-\(d\) polynomial over \(\mathbb F_2\). By known techniques, this implies pseudorandom generators with polylogarithmic seed length for low-degree polynomials over $\mathbb F_2$ and for alternating circuits with parity gates.
\end{abstract}
\section{Introduction}\label{sec:intro}

Proving \emph{correlation bounds for low-degree polynomials} is a fundamental challenge in theoretical computer science. Such bounds have received a lot of attention, because they are a prerequisite for progress on a number of other long-standing challenges in computational complexity, related to matrix rigidity, multiparty communication complexity, circuit lower bounds, pseudorandomness, and more.  For background on correlation bounds, see the survey \cite{corr-survey} or the book \cite{moti}.  
For two functions $f,g\colon \F^n \rightarrow \F$, we define their correlation by 
$$\corr(f,g)\coloneqq |\E_x[(-1)^{f(x)+g(x)}]|,$$ where $x$ is uniform over $\F^n$. 

For degree-$d$ polynomials over $\F$, the fundamental works of Razborov \cite{Raz87} and Smolensky \cite{Smo87} proved correlation $<cd/\sqrt{n}$ with the Majority function on $n$ bits; this is tight~\cite{Viola-L12requiresCor}. Remarkably, for polynomials with degree $d\ge \log n$, no explicit function (e.g. in NP) was known to have asymptotically smaller correlation. Improving this bound was a prerequisite for each of the other long-standing challenges mentioned above.

Here, we finally solve this challenge and in fact obtain exponential correlation bounds, even for polynomials of degree a power in the input length.  Our results apply to any \emph{versatile} function, as discussed below.  To keep this exposition focused, we consider polynomials over $\F$ and their correlation with the XOR of Majority on $k$ disjoint inputs of $\ell$ bits, for odd $\ell$,  denoted $\XMajkl$. This is a natural candidate that was considered in previous works \cite{ViolaGF2,VW08,chattopadhyay2020xor,Viola-L12requiresCor}.

\begin{theorem}\label{thm:total-degree}
For every polynomial $p\colon\F^{k\ell}\to\F$ 
of  total degree at most $d$ we have
\[
 \corr(\XMajkl,p) \le\left(\frac{2d}{\sqrt\ell}\right)^k.
\]
\end{theorem}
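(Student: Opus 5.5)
The plan is to prove a stronger, \emph{tensorizable} single-block statement and then induct on $k$, peeling off one majority at a time. Write $M(x)=(-1)^{\Maj(x)}$ for $x\in\F^\ell$. The target quantity is
\[
\E_{x_1,\dots,x_k}\Bigl[\prod_{i=1}^k M(x_i)\,(-1)^{p(x_1,\dots,x_k)}\Bigr].
\]
A naive induction fails. Fixing $x_2,\dots,x_k$ leaves a degree-$d$ polynomial in $x_1$, but the product of the remaining majorities then acts as an arbitrary sign, and a triangle inequality loses everything. So I would first isolate a one-block inequality whose right-hand side is a \emph{norm} that survives averaging over the other blocks. This is presumably the ``versatility'' property mentioned in the introduction.

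\textbf{One-block inequality.} I would look for a seminorm $N$ on functions $g\colon\F^\ell\to\mathbb R$ with three properties:
\begin{enumerate}
\item $|\E_x[M(x)g(x)]|\le (2d/\sqrt\ell)\,N(g)$;
\item $N((-1)^q)\le 1$ for every $q$ of degree at most $d$;
\item $N$ is convex and compatible with mixing in extra variables.
\end{enumerate}
The natural candidate is the dual of ``degree-$d$ approximation on the middle slices'', modeled on the Razborov--Smolensky argument. The correlation of $M$ with $g$ is controlled by how $g$ behaves on the two middle Hamming slices, which carry mass about $1/\sqrt\ell$. On those slices, $M$ is nearly orthogonal to everything expressible with degree-$d$ ``derivatives''. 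Concretely, I would pair $x$ with $x\oplus e_j$ across the majority threshold and write $\E[Mg]$ as an average of directional differences of $g$ supported near the middle layer. I would then bound each such difference through $d$ applications of the derivative operator, each costing a factor of at most $2$; this is the source of the constant $2d$.

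\textbf{Induction.} Set $G(x_1)=\E_{x_2,\dots,x_k}\bigl[\prod_{i\ge2}M(x_i)(-1)^{p}\bigr]$. The key claim is $N(G)\le (2d/\sqrt\ell)^{k-1}$. Swapping the order of expectation, $N(G)$ should be bounded by the correlation of $\prod_{i\ge2}M(x_i)$ with a family of polynomials in $x_2,\dots,x_k$ of degree at most $d$, obtained by applying the ``dual witnesses'' of $N$ (derivatives in $x_1$ and restrictions) to $p$. Derivatives do not increase degree, so the induction hypothesis applies to this family. Combining the claim with the one-block inequality applied to $G$ gives the factor $(2d/\sqrt\ell)^k$.

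\textbf{Main obstacle.} The hardest step is designing $N$ so that properties 1 and 2 hold \emph{and} the swap of expectations in the induction step does not lose an extra factor. The difficulty is that $N$ must be computed by test objects that are themselves degree-$\le d$ operations on $p$. I would first try to define $N(g)$ as the maximum over $j\le d$ and over directions $h_1,\dots,h_j$ of the middle-slice average of $|\Delta_{h_1}\cdots\Delta_{h_j}g|$, suitably weighted. I would then check that Smolensky's symmetrization against $M$ goes through with this norm. If the absolute values obstruct linearity in the swap, I would replace them by signed test functions and take a supremum. That makes $N$ a supremum of linear functionals, which commute with $\E_{x_2,\dots,x_k}$.
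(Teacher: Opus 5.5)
There is a genuine gap. The whole argument rests on the seminorm $N$ and on the ``key claim'' $N(G)\le(2d/\sqrt\ell)^{k-1}$, and you construct neither. Moreover, the framework cannot make the problem easier.

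Suppose, as you propose at the end, that $N(g)=\sup_{L\in\mathcal L}|L(g)|$ for a family $\mathcal L$ of linear functionals. Each $L$ must be admissible for your induction step: for every degree-$d$ polynomial $p(x_1,x')$, the function $x'\mapsto L\bigl((-1)^{p(\cdot,x')}\bigr)$ must lie in the absolutely convex hull $K(x')$ of $\{(-1)^{q(x')}:\deg q\le d\}$. Admissibility is preserved under absolutely convex combinations and limits. So property~1, $|\E[Mg]|\le c\,N(g)$ with $c=2d/\sqrt\ell$, together with the bipolar theorem, forces the functional $g\mapsto\E[Mg]/c$ to be admissible itself.

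That conclusion says: for every such $p$, the function $x'\mapsto\E_{x_1}[M(x_1)(-1)^{p(x_1,x')}]$ lies in $c\,K(x')$. This is a one-block XOR lemma against arbitrary real test functions $\psi(x')$. It implies the theorem by a two-line induction, and it is at least as strong as the theorem. If you use only the weakest admissibility your induction needs (a bound on the correlation with $\prod_{i\ge2}M(x_i)$), the same argument shows that admissibility of $g\mapsto\E[Mg]/c$ is literally the statement of the theorem. So $N$ only relocates the difficulty, and nothing in the proposal addresses it.

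The concrete candidate you sketch would not supply the missing statement either. Iterated directional derivatives are the Gowers-norm route. Taking $d$ derivative steps, each costing a factor $2$, gives $2^d$, not $2d$. Such arguments lose factors exponential in $d$, which is exactly why earlier XOR lemmas for polynomials gave nothing once $d\ge\log n$. In the true bound, $d$ enters additively: it is the number of middle binomial layers $\binom{\ell}{m-d+1},\dots,\binom{\ell}{m}$, each at most $\sqrt{2/\pi}\,2^\ell/\sqrt\ell$.

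The paper gets the exponent $k$ without any block-by-block induction. It works with the joint agreement set $E=\{x:p(x)=\XMaj(x)\}$ and treats the block majorities $h_i$ as formal variables in the basis $x^{\mathbf A}h^S$. On $E$ it uses the identity $h_j=p+\sum_{i\ne j}h_i$, together with a weight that grows by at most $\deg p$ under multiplication by $p$, so the rewriting terminates. This shows every function on $E$ is the restriction of an element of a space of dimension $2^{k\ell-1}+2^{k-1}N_{\ell,d}^k$. The only property of majority used is the interpolation identity $f=q+hr$ with $q,r$ of degree at most $m$, not a norm inequality. The exponential saving comes from the fact that only the $N_{\ell,d}^k$ monomials of degree $>m-d$ in every block retain all $2^k$ majority products.
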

The analysis actually gives a constant better than $2$, namely $\sqrt{2/\pi}\cdot 2\approx 1.596$.  This bound is tight up to the constant.  This follows by the construction in \cite{Viola-L12requiresCor} applied to each block.

\paragraph{Applications.}
This result has several applications. 
Applying the construction by Nisan and Wigderson~\cite{Nis91,nisan1994hardness}  based on our hard function $\XMajkl$ gives seed length $O\bigl(d^4\cdot \polylog(n/\eps))$ 
for degree-$d$ polynomials over $\mathbb F_2$ on $n$ variables.
Furthermore, the PRG construction of~\cite{ChattopadhyayGLLS21}
gives the improved seed length $O(d^2\cdot \polylog(n/\eps))$.
The seed length of previous generators \cite{Bog05,Lov09,Viola-d} had an exponential dependence on the degree, and so gave nothing for $d\ge \log n$.
We provide details in \Cref{appendixA,appendixB}.

Combining
Razborov's polynomial approximation~\cite{Raz87} with 
Nisan and Wigderson's construction~\cite{Nis91,nisan1994hardness} (see also~\cite{moti})
gives pseudorandom generators (PRGs) with polylogarithmic seed length for polynomial-size, alternating circuits with parity gates (i.e., $\mathsf{AC}^0[\oplus]$).  The previous best was seed length $n(1-o(1))$ \cite{FeffermanSUV10}. We provide details in \Cref{appendixC}.

\section{Proof of Main Result}
The proof of \Cref{thm:total-degree} is an extension of Smolensky's celebrated encoding argument to tuples.  We introduce necessary notation and preliminaries in \cref{subsec:prelim}.  A new notion of weights for functions is introduced in \cref{subsec:weights}. We prove the multiplication lemma in \cref{subsec:weights} and the
main theorem by dimension counting in \cref{subsec:main_thm_dimension}.

\subsection{Preliminaries and Notation} \label{subsec:prelim}
For convenience, set $\ell \coloneqq 2m+1$ and write $h$ for the Boolean Majority function on $\ell$ bits, so $h(z)=1$ exactly when $|z|\ge m+1$, where $|z|$ is the Hamming weight of $z$.  Let $n=k \ell$. An $n$-bit input $x$ is partitioned into $k$
blocks $x^{(1)},\ldots,x^{(k)}$ of length $\ell$. For the remainder of the paper, for ease of notation, we use the shorthand $\XMaj$ for $\XMajkl$. We also let $h_i(x) \coloneqq h(x^{(i)})$ so that
\[
\XMaj(x)=\bigoplus_{i=1}^k h_i(x).
\]

\paragraph{Interpolation.} We first review the basic interpolation in (some expositions of) Smolensky's correlation bound with (one) majority. Let us consider one block ($k=1$) of $\ell$ bits.  Let $V$ be the space of multilinear polynomials of degree at most $m$. Its dimension is $\sum_{r=0}^m\binom\ell r=2^{\ell-1}$.
We use the fact that every function on either of the two sets
\[
 H_0 \coloneqq \{z\in\zo^\ell:|z|\le m\},\quad
 H_1 \coloneqq \{z\in\zo^\ell:|z|\ge m+1\}
\]
agrees with a unique polynomial in $V$. We call this its
\emph{interpolating polynomial}.  Consequently, every function $f\colon\F^\ell\to\F$ can be written
uniquely as
\begin{equation}\label{eq:one-block}
 f=q+hr,\qquad q,r\in V.
\end{equation}
(This property is also referred to as $h$ being a {\em versatile} function,  see~\cite{kopparty2011complexity}. In fact, this is the only property of Majority that we are using in the proof.)
Indeed, let $q$ agree with $f$ on $H_0$, let $s$ agree with $f$ on
$H_1$, and set $r=q+s$. Then $q+hr$ equals $q=f$ on $H_0$ and
$q+r=s=f$ on $H_1$. For uniqueness, the values of $f$ on $H_0$
determine $q$, and its values on $H_1$ then determine $q+r$.

Next we apply this interpolation separately in each block.
For $\alpha=(\alpha_1,\ldots,\alpha_k)\in\zo^k$, set
\[  H_\alpha \coloneqq H_{\alpha_1}\times\cdots\times H_{\alpha_k}\]
and
\[
 h_\alpha(x) \coloneqq \prod_{i : \alpha_i = 1} h_i(x) 
 \prod_{i : \alpha_i = 0} (1-h_i(x) ) = \prod_{i=1}^k \mathbf 1[h_i(x)=\alpha_i].
\]

Thus $H_\alpha$ consists of the inputs whose block majorities are
$\alpha_1,\ldots,\alpha_k$, and $h_\alpha$ is its indicator.
Write $\operatorname{par}(\alpha)=\bigoplus_{i=1}^k\alpha_i$ for
the parity of $\alpha$.

An important space in this work consists of polynomials where the degree \emph{in each block} is bounded.

\begin{definition}
Let $W$ denote the space of multilinear polynomials in $k\ell$ variables of degree at most $m$ in each block.    
\end{definition}

We can now state and prove the new interpolation result.

\begin{lemma}\label[lemma]{lem:h-decomposition}
Every function $f\colon\F^{k\ell}\to\F$ can be written uniquely as
\begin{equation}\label{eq:h-decomposition}
 f=\sum_{\alpha\in\zo^k} f_\alpha h_\alpha,\qquad f_\alpha\in W.
\end{equation}
Here $f_\alpha$ is the unique polynomial in $W$ that agrees with
$f$ on $H_\alpha$.
\end{lemma}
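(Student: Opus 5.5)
The plan has three steps: show that every function on a product set $H_\alpha$ is interpolated by a unique element of $W$, use the disjointness of the $H_\alpha$ to get existence, and use the same interpolation fact for uniqueness.

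\textbf{Interpolation on a product set.} The key fact is that for each $\alpha\in\zo^k$, every function $g\colon H_\alpha\to\F$ agrees with a unique polynomial in $W$. This is the tensor-product version of the one-block fact. Let $V$ be the space of degree-$\le m$ multilinear polynomials in $\ell$ variables. Then $W = V^{\otimes k}$, since a monomial has degree at most $m$ in each block exactly when it is a product of $k$ monomials, each of degree at most $m$ in its own block.

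Consider the restriction map $W\to \F^{H_\alpha}$. Under the identification $\F^{H_\alpha}=\F^{H_{\alpha_1}}\otimes\cdots\otimes\F^{H_{\alpha_k}}$, this map is the tensor product of the one-block restriction maps $V\to\F^{H_{\alpha_i}}$. Each of those is a linear bijection by the one-block interpolation fact, so their tensor product is a bijection as well.

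If one prefers to avoid tensor language, dimension counting works. We have $\dim W=(2^{\ell-1})^k=|H_\alpha|$, so it suffices to show surjectivity. Indicator functions of points $(z^{(1)},\dots,z^{(k)})\in H_\alpha$ span $\F^{H_\alpha}$. Each such indicator is the product over blocks of the indicator of $z^{(i)}$ on $H_{\alpha_i}$, and that one-block indicator is interpolated by some $v_i\in V$. The product $\prod_i v_i$ lies in $W$, because the blocks use disjoint variables, and it agrees with the indicator on $H_\alpha$.

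\textbf{Existence.} For each $\alpha$, let $f_\alpha\in W$ be the interpolant of $f|_{H_\alpha}$. The sets $H_\alpha$ partition $\F^{k\ell}$, since every $x$ has a unique vector of block majorities. The $h_\alpha$ are their indicators, so $h_\alpha h_\beta=0$ for $\alpha\ne\beta$. Hence for $x\in H_\beta$,
\[
\sum_\alpha f_\alpha(x) h_\alpha(x)=f_\beta(x)=f(x).
\]

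\textbf{Uniqueness.} Suppose $f=\sum_\alpha g_\alpha h_\alpha$ with every $g_\alpha\in W$. Evaluating on $H_\beta$ gives $g_\beta|_{H_\beta}=f|_{H_\beta}$. By the uniqueness part of the interpolation fact, $g_\beta=f_\beta$. As a cross-check, the count $2^k\cdot 2^{k(\ell-1)}=2^{k\ell}$ matches the dimension of the space of all functions on $\F^{k\ell}$.

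The only nontrivial step is the product-set interpolation. The main point of care there is checking that $W$ is exactly $V^{\otimes k}$ in the polynomial sense, i.e.\ that products of per-block polynomials of degree at most $m$ span $W$ and remain multilinear. This holds because the blocks use disjoint variables.
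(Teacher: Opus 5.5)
Your proof is correct, and its outer structure matches the paper's. Existence comes from the sets $H_\alpha$ partitioning $\F^{k\ell}$ with indicators $h_\alpha$, and uniqueness from restricting to each $H_\beta$ and invoking uniqueness of interpolation.

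The difference lies in the one nontrivial step, interpolation on the product set $H_\alpha$ by elements of $W$. The paper argues by induction on the blocks. It fixes blocks $2,\dots,k$ and interpolates in block $1$ on $H_{\alpha_1}$. It then interpolates the resulting coefficient functions, which live on $H_{\alpha_2}\times\cdots\times H_{\alpha_k}$, by the inductive hypothesis; uniqueness follows from the same induction applied to an element of $W$ vanishing on $H_\alpha$.

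You instead note $\dim W=(2^{\ell-1})^k=|H_\alpha|$ and obtain surjectivity by writing each point indicator on $H_\alpha$ as a product of one-block interpolants. Equivalently, you view the restriction map as the tensor product of the one-block isomorphisms $V\to\F^{H_{\alpha_i}}$.

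Your linear-algebraic route gets injectivity for free from the dimension match, so uniqueness needs no separate argument. The paper's inductive route is more constructive and avoids the global dimension count, but its written sketch leaves the uniqueness half of the induction implicit.
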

In the following, it is useful to think of the individual $h_i$ as new formal variables.
\begin{proof}
Fix $\alpha$. We interpolate successively in the $k$ blocks to obtain a unique polynomial $f_\alpha\in W$ agreeing with $f$ on $H_\alpha$.  Specifically, this step is done by induction.  Fix all the blocks except the first.  We can interpolate the resulting function on that block.  The coefficients of this interpolation are functions of the other blocks, which can be interpolated by induction.

At every input, exactly one indicator $h_\alpha$ equals $1$, so the
sum in \Cref{eq:h-decomposition} equals $f$.
Conversely, restricting this identity to $H_\alpha$ determines
$f_\alpha$ by uniqueness of interpolation.
\end{proof}

\paragraph{The $h$-basis.}
We next specify a convenient basis for functions.
A monomial of degree at most $m$ in each of the $k$ blocks is specified by a
tuple of subsets $\mathbf A=(A_1,\ldots,A_k)$, where
$A_i\subseteq[\ell]$ and $|A_i|\le m$. Let $\mathcal M$ be the
set of all such tuples, and write
\[
 x^{\mathbf A} \coloneqq \prod_{i=1}^k\prod_{j\in A_i}x^{(i)}_j.
\]
The degree of this monomial in the $i$-th block is simply $|A_i|$. 
Expanding each function $f_{\alpha}$ as a linear combination of the monomials $x^{\mathbf A}$, and each function $h_{\alpha}$ as a linear combination of $h^S \coloneq \prod_{i \in S} h_i$, we get that any Boolean function $f$ can be uniquely expressed as
\[ f=\sum_{\mathbf A\in \mathcal M,S \subseteq [k]}c_{\mathbf A,S}(f) x^{\mathbf A}h^S.\]
We call  $\{x^{\mathbf A} h^S\}_{\mathbf A, S}$ the \emph{$h$-basis}.%
 
\subsection{Weight} \label{subsec:weights}
We define a notion of \emph{weight} $w$ for functions, based on the expansion in the 
$h$-basis 
with the useful property that it increases by at most $t$ under multiplication by a degree-$t$ polynomial. 
For a basis function $x^{\mathbf A}h^S$, we define its weight by letting every block $i\in S$ contribute $m+1$, and every block $i\notin S$ contribute $|A_i|$:
\[ w(x^{\mathbf A}h^S) \coloneqq (m+1)|S| + \sum_{i\notin S}|A_i|. \]
For nonzero $f$, let $w(f)$ be the largest weight of a
basis function with nonzero coefficient in its unique expansion,
and set $w(0)=-\infty$.  
Note that this is not the usual polynomial degree.%
\footnote{Instead, weight is a capped version of polynomial degree in the following sense:
for each monomial in the ordinary multilinear representation
of $f$, we sum its block degrees capped at $m+1$, and then take
the maximum over monomials with nonzero coefficients.}

\begin{lemma}[Multiplication Lemma]\label[lemma]{lem:multiplication}
Let $f\colon \F^{k\ell} \to \F$ be a function, and $q$ be a polynomial in $k\ell$ variables of total degree at most $t$.  Then
\[
 w(qf)\le w(f)+t.
\]
\end{lemma}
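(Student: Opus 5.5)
Since $w$ is the maximum weight over the basis functions in an expansion, and the expansion is linear in $f$, it suffices to show: if $g=x^{\mathbf A}h^S$ is a single basis function, then every basis function in the expansion of $x_j^{(i)}g$ has weight at most $w(g)+1$. Iterating this over the variables of each monomial of $q$, and using linearity in $q$, gives $w(qf)\le w(f)+t$. Here we use that $w(a+b)\le\max(w(a),w(b))$, which holds because expansions are unique and add coefficientwise.

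Fix a block $i$ and a variable $y=x^{(i)}_j$, and write $g=x^{\mathbf A}h^S$. The factors belonging to other blocks are untouched, so everything reduces to one block. The key structural fact is that the $h$-basis is a tensor product of the one-block bases $\{x^{A}h^{s}\}$ with $|A|\le m$ and $s\in\{0,1\}$. By \Cref{eq:one-block}, a function of block $i$ alone has a unique one-block expansion, and the weight is additive over blocks. It therefore suffices to prove the one-block claim: for $u=x^{A}h^{s}$ with weight $w_1(u)=(m+1)s$ if $s=1$ and $|A|$ if $s=0$, the function $yu$ has one-block weight at most $w_1(u)+1$.

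\emph{Case $s=1$.} Here $w_1(u)=m+1$, and every one-block basis function has weight at most $m+1$, so the bound holds trivially.

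\emph{Case $s=0$ and ($j\in A$ or $|A|<m$).} Then $yx^A$ is a monomial of degree at most $m$, which is itself a basis function of weight at most $|A|+1$.

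\emph{Case $s=0$, $j\notin A$, $|A|=m$.} Now $yx^A$ has degree $m+1$, and by \Cref{eq:one-block} we can write $yx^A=q'+hr'$ with $q',r'\in V$. The $hr'$ terms each have weight $m+1=|A|+1$. The $q'$ terms have weight at most $m\le |A|+1$. So the bound holds here too.

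The only point needing care is the reduction to one block, namely that multiplying by a single-block function $F_i(x^{(i)})$ times a fixed product of other-block basis factors yields an $h$-basis expansion whose weight is the sum of the blocks' weights. This follows from uniqueness in \Cref{lem:h-decomposition}. Expanding $F_i$ in its one-block basis and multiplying by the other-block factors $x^{A_{i'}}h^{[i'\in S]}$ produces distinct $h$-basis elements, with no cancellation, since the representation is a tensor product. Hence the maximum weight equals the one-block maximum plus the fixed contributions of the other blocks. I expect this bookkeeping, rather than the case analysis, to be the main (mild) obstacle.
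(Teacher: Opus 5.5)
Your proof is correct and follows essentially the same route as the paper. You reduce to a single basis function times a single variable, run the same three-case analysis in that block via \Cref{eq:one-block}, and then iterate and sum using $w(a+b)\le\max(w(a),w(b))$. Two differences are cosmetic: in the case $i\in S$ you invoke the trivial cap of $m+1$ on a block's contribution instead of the paper's computation $h_i(q_i+h_ir_i)=h_i(q_i+r_i)$, and your tensor-product bookkeeping spells out what the paper leaves implicit.
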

\begin{proof}
Pick a basis function (with nonzero coefficient) for $f$.  Let it be $b \coloneqq x^{\mathbf A}h^S$. We prove that after multiplication by $q$ the weight increases by at most $t$.  First we  prove it when $q$ is a single variable $z$.  Suppose $z=x^{(i)}_e$ is a variable in block $i$. There are three cases to consider:

\begin{enumerate}
    \item If $i\notin S$ and $|A_i|<m$, the weight increases by at most one.
    \item Assume $i\notin S$ and $|A_i|=m$. This is the boundary case where the block has already reached its maximum degree and does not have the factor $h_i$.  For this case we 
use \Cref{eq:one-block} to rewrite the monomial $z \prod_{j \in A_i} x_j^{(i)}$ as $q_i+h_i r_i$, with $q_i,r_i\in V$. Every resulting
term contributes at most $m+1$ in block $i$, again increasing the
weight by at most one.
    \item Finally, suppose $i\in S$. Again using \Cref{eq:one-block} write
$z\prod_{j\in A_i}x_j^{(i)}=q_i+h_i r_i$, with $q_i,r_i\in V$.
Since $h_i^2=h_i$, the new factor in block $i$ is
$h_i(q_i+h_i r_i)=h_i(q_i+r_i)$.
Every nonzero resulting term still contains $h_i$ and therefore
contributes $m+1$ in this block. The weight does not increase in
this case.
\end{enumerate}

This verifies the case of one variable.  To show the bound for an arbitrary monomial, repeat for each variable.  For an arbitrary polynomial $q$, apply to each monomial, and then add the
results. Combining equal terms can cancel coefficients but cannot
introduce a term of greater weight.
\end{proof}

\subsection{Proof of the Main Theorem by Dimension Counting}
\label{subsec:main_thm_dimension}

We will use the weight defined above and the multiplication lemma
(\cref{lem:multiplication}).  The role of the weight will be to show that a
finite rewriting procedure terminates.

Let $p$ be a multilinear polynomial in the $k\ell$
variables of total degree at most $d$.  Define
\[
 E \coloneqq \{x\in \zo^{k\ell}:p(x)=\XMaj(x)\}.
\]

We need to show that the relative size of $E$ is close to $1/2$.  To do so we will precisely follow Smolensky's idea and show that every function on $E$ can be written as (the restriction of) a function from a space of dimension only slightly larger than half the dimension of the full function space.  The new space is defined in the  $h$-basis as follows. For each monomial
\(x^{\mathbf A}\) with a low-degree block, we shall pick the first such block $j$ and keep only the \(2^{k-1}\) basis terms \(\{x^{\mathbf A}h^S\}_{S\subseteq [k]\setminus\{j\}}\) not involving $h_j$. 
For monomials $x^{\mathbf A}$ whose
degree is high in \emph{every block}, we keep all \(2^k\) terms \(\{x^{\mathbf A}h^S\}_{S\subseteq [k]}\).
A simple calculation bounds the dimension of the space spanned by these basis terms, which then bounds $|E|$.

We define the set of exceptional monomial indices
\[
 \mathcal B
 \coloneqq \{\mathbf A\in\mathcal M:
       |A_i|>m-d\text{ for every }i\in[k]\}.
\]
For every non-exceptional $\mathbf A\notin\mathcal B$ choose a low-degree block:
\[
 j(\mathbf A) \coloneqq \min\{i\in[k]:|A_i|\le m-d\}.
\]
Define $\mathcal R$ to be the $\mathbb F_2$-linear span of the following basis
functions:
\begin{equation}\label{eq:R-definition}
\{x^{\mathbf A}h^S : \mathbf A\in\mathcal B,
                         \ S\subseteq[k]\} \,\cup\, 
 \{x^{\mathbf A}h^S : \mathbf A\notin\mathcal B,
                         \ S\subseteq[k],
                         \ j(\mathbf A)\notin S\}
\end{equation}
In words, for a non-exceptional monomial $x^{\mathbf A}$ we choose one
low-degree block $j(\mathbf A)$ and forbid all products containing the
factor $h_{j(\mathbf A)}$.  For an exceptional monomial we impose no
restriction.

\begin{lemma}\label{lem:restriction-span}
For every function $a\colon E\to\mathbb F_2$, there exists a function
$r\in\mathcal R$ such that $r(x)=a(x) \text{ for every }x\in E$.
\end{lemma}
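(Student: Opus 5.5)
The plan is to follow Smolensky's rewriting idea, using the relation that holds on $E$: for $x\in E$ we have $p(x)=\XMaj(x)=\sum_{i=1}^k h_i(x)$ over $\F$. So for every block $j$,
\[
 h_j(x) = p(x) + \sum_{i\neq j} h_i(x)\qquad\text{for all } x\in E .
\]
First extend $a$ arbitrarily to a function $f\colon\F^{k\ell}\to\F$. Then write $f$ in the $h$-basis, $f=\sum c_{\mathbf A,S}\,x^{\mathbf A}h^S$. Call a basis function $x^{\mathbf A}h^S$ \emph{bad} if $\mathbf A\notin\mathcal B$ and $j(\mathbf A)\in S$, and \emph{good} otherwise. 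The good basis functions are exactly the generators of $\mathcal R$ in \Cref{eq:R-definition}. We will repeatedly replace bad terms by functions that agree with them on $E$, until no bad terms remain. The result is then an $r\in\mathcal R$ with $r=f=a$ on $E$.

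The rewriting step is as follows. Let $b=x^{\mathbf A}h^S$ be bad, with $j=j(\mathbf A)\in S$, and put $S'=S\setminus\{j\}$. On $E$,
\[
 b = x^{\mathbf A}h^{S'}\cdot p \;+\; \sum_{i\neq j} x^{\mathbf A}h^{S'}h_i .
\]
For each $i\neq j$, the product $h^{S'}h_i$ equals $h^{S'\cup\{i\}}$, using $h_i^2=h_i$ when $i\in S'$. The resulting term $x^{\mathbf A}h^{S'\cup\{i\}}$ has the same $\mathbf A$, hence the same $j(\mathbf A)$, and $j\notin S'\cup\{i\}$. So all these terms are good and need no further work.

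The remaining term is $p\cdot x^{\mathbf A}h^{S'}$, and we control it by weight. Removing $h_j$ changes the contribution of block $j$ from $m+1$ to $|A_j|\le m-d$, so
\[
 w(x^{\mathbf A}h^{S'}) \le w(b)-(m+1)+(m-d)=w(b)-d-1 .
\]
By the Multiplication Lemma (\Cref{lem:multiplication}) with $t=d$, expanding $p\cdot x^{\mathbf A}h^{S'}$ in the $h$-basis gives only terms of weight at most $w(b)-1$. These terms may be good or bad.

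The main point needing care is termination, since the rewriting can create new bad terms. I will argue by induction on the pair (maximum weight $W$ of a bad term in the current expansion, number of bad terms of weight $W$), ordered lexicographically. Weights are bounded integers, so this pair ranges over a well-ordered set. Rewriting one bad term of maximal weight $W$ has two effects:
\begin{itemize}
 \item it removes that term (or cancels it modulo $2$), while adding only good terms together with terms of weight at most $W-1$;
 \item bad terms of weight $W$ are never created, so their number strictly decreases, or $W$ itself drops.
\end{itemize}
Collecting coefficients modulo $2$ can only cancel terms, never create new ones. Hence the procedure ends with an expansion containing only good terms, i.e. some $r\in\mathcal R$. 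Since every step preserved the values on $E$, we get $r(x)=a(x)$ for all $x\in E$.
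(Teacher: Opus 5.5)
Your proof is correct and follows the paper's argument. It uses the same identity $h_j=p+\sum_{i\ne j}h_i$ on $E$, the same drop $w(x^{\mathbf A}h^{S\setminus\{j\}})\le w(x^{\mathbf A}h^S)-(d+1)$, and the Multiplication Lemma to push $p\cdot x^{\mathbf A}h^{S\setminus\{j\}}$ to strictly smaller weight. The only difference is bookkeeping: the paper uses linearity to reduce to a single basis function and inducts on its weight, whereas your lexicographic potential on bad terms tracks the whole expansion at once and is equally valid, since it correctly ignores the good terms $x^{\mathbf A}h^{S'\cup\{i\}}$, whose weight can exceed $w(b)$.
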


\begin{proof}
As any function $a\colon E\to \F$ can be expressed as a linear combination of basis functions $\{x^{\mathbf A} h^S\}_{\mathbf A\in \mathcal M, S\subseteq [k]}$, it suffices to prove the claim for each basis function separately.
Let
\[
 T=x^{\mathbf A}h^S.
\]
The proof is by induction on the weight of $T$. If $w(T)=0$ then $T$ is a constant function which belongs to $\mathcal R$. Assume from now that $w(T) \ge 1$.
If $T$ is one of the allowed terms in \eqref{eq:R-definition} there is nothing to prove. Otherwise we have 
$\mathbf A\notin\mathcal B$ and $j=j(\mathbf A)$ belongs to $S$.  Set
\[
 Q \coloneq x^{\mathbf A}h^{S\setminus\{j\}}. 
\]
On $E$ we have
\[
 p=\XMaj=h_1+\cdots+h_k,
\]
and therefore
\[
 h_j=p+\sum_{i\ne j}h_i.
\]
Multiplying this last equation by $Q$ gives, on $E$, %
\begin{equation}\label{eq:forbidden-rewrite}
 T=Qp+\sum_{i\ne j}Qh_i.
\end{equation}
Every term $Qh_i$ on the right is allowed because it does not contain the factor
$h_j$.  Repeated factors are simplified using $h_i^2=h_i$.

It remains to handle $Qp$.
Since $h_j$ contributes $m+1$ to $w(T)$, whereas after removing
$h_j$ the $j$-th block contributes only $|A_j|$ to $w(Q)$, we have
\[
w(Q)=w(T)-\bigl((m+1)-|A_j|\bigr).
\]
Because \(|A_j|\le m-d\),
\[
w(Q)\le w(T)-(d+1).
\]
The multiplication lemma (\cref{lem:multiplication})  then gives
\[
w(Qp)\le w(Q)+d\le w(T)-1.
\]

Expand $Qp$ in the full basis
$\{x^{\mathbf A}h^S\}_{\mathbf A,S}$.  Every basis term (with non-zero coefficient) in this expansion
has weight strictly smaller than $w(T)$.  We may therefore apply induction
to each of these terms.  Equation
\eqref{eq:forbidden-rewrite} then expresses $T$, on $E$, as a linear
combination of allowed generators.
\end{proof}

We now count the dimension of $\mathcal R$. Let
\[
 M_\ell \coloneq \sum_{r=0}^m\binom\ell r=2^{\ell-1},
 \qquad
 N_{\ell,d} \coloneq \sum_{r=\max\{0,m-d+1\}}^m\binom\ell r.
\]
There are $M_\ell^k$ ordinary monomials in $\mathcal M$, and exactly
$N_{\ell,d}^k$ exceptional monomials in $\mathcal B$.  For each
nonexceptional monomial $x^{\mathbf A}$, exactly $2^{k-1}$ subsets $S\subseteq[k]$ avoid
$j(\mathbf A)$.  For each exceptional monomial, all $2^k$ subsets are
allowed.  Hence
\begin{align}
 \dim(\mathcal R)
 &=2^{k-1}\bigl(M_\ell^k-N_{\ell,d}^k\bigr)
     +2^kN_{\ell,d}^k =
     2^{k-1}M_\ell^k+2^{k-1}N_{\ell,d}^k.
 \label{eq:R-dimension}
\end{align}

By \cref{lem:restriction-span}, the restriction map
$\mathcal R\to\mathbb F_2^E$ is surjective.  Since the vector space of
functions on $E$ has dimension $|E|$, \Cref{eq:R-dimension} gives
\[
 |E|\le  2^{k-1}M_\ell^k+2^{k-1}N_{\ell,d}^k = 2^{k\ell-1}+2^{k-1}N_{\ell,d}^k.
\]
Consequently,
\[
 \frac{2|E|-2^{k\ell}}{2^{k\ell}}
 \le \left(\frac{N_{\ell,d}}{2^{\ell-1}}\right)^k.
\]
Applying the same argument to $p+1$ replaces $E$ by the disagreement set.
Therefore
\[
 \corr(\XMaj,p)
 \le \left(\frac{N_{\ell,d}}{2^{\ell-1}}\right)^k.
\]
Finally, there are at most $d$ summands in $N_{\ell,d}$, each at most
\(
 \binom\ell m\le \sqrt{\frac{2}{\pi}}\,\frac{2^\ell}{\sqrt\ell}.
\)
Thus
\[
 \corr(\XMaj,p)
 \le \left(\sqrt{\frac{2}{\pi}}\,\frac{2d}{\sqrt\ell}\right)^k
 \le \left(\frac{2d}{\sqrt\ell}\right)^k.\qed
\]

\paragraph{Acknowledgment.}
This work was done while most of the authors were at the Simons Institute for Theory of Computing, attending the Fall 2026 Pseudorandomness \& High-Dimensional Expansion program, which provided an ideal working environment.

\paragraph{AI methodology.}
This work began with several authors prompting AI tools about correlation bounds for the XOR of Majority. One of these interactions eventually produced a new correlation bound, albeit much weaker than the one presented in this paper, and with a much more complicated proof.
The authors then undertook a research project to improve the bounds and simplify the argument through repeated interactions with AI tools, interspersed with their own insights and simplifications. Some intermediate proofs were extremely complex; the final proof emerged through many rounds of refinement.

\bibliographystyle{alpha}
\bibliography{OmniBib}
\appendix
\section{PRG from Nisan-Wigderson Construction}\label{appendixA}

We present the details of the PRG consequence stated in the introduction.
We use the construction of Nisan and Wigderson~\cite{nisan1994hardness},
originating in \cite{Nis91},
with a separate seed for each majority block. Replacing the output
bits one at a time in the hybrid argument, keeping the XOR across the
majority blocks,  lets us bound the degree of each restricted output
without paying a factor for the number of blocks. Directly applying the construction to the entire hard function would give a slightly
worse seed length, as we explain below.

Throughout the appendices, logarithms are base two, $U_r$ denotes the
uniform distribution on $\zo^r$, and $0<\varepsilon\le 1/2$.
A map $G\colon\zo^r\to\zo^n$ $\varepsilon$-fools a class of Boolean functions
if $|\Pr[f(G(U_r))=1]-\Pr[f(U_n)=1]|\le\varepsilon$ for every function
$f$ in the class.

\begin{theorem}\label{thm:nw-prg}
There is an explicit PRG that
$\varepsilon$-fools degree-$d$ polynomials over $\F$ on $n$ variables
with seed length $O\bigl(d^4\log^3(n)\log(n/\varepsilon)\bigr)$.
\end{theorem}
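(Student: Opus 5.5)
We instantiate the Nisan--Wigderson generator with the hard function $\XMajkl$, but assign each of the $k$ majority blocks its own NW design and its own seed. The parameters are $\ell=\Theta(d^2\log^2(n/\eps)\cdot \mathrm{const})$ and $k=\Theta(\log(n/\eps))$; these make \Cref{thm:total-degree} give correlation $(2d/\sqrt\ell)^k\le \eps/n$ against every polynomial of degree at most some $D$ to be determined. Concretely, for each block $i\in[k]$ we take a seed $s^{(i)}\in\zo^{r_0}$ and a design $S_1,\ldots,S_n\subseteq[r_0]$ with $|S_j|=\ell$ and $|S_j\cap S_{j'}|\le \log n$ for $j\ne j'$. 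Such designs exist explicitly with $r_0=O(\ell^2/\log n)$ whenever $\ell\ge\log n$. The $j$-th output bit is
\[
y_j=\bigoplus_{i=1}^k \Maj\bigl(s^{(i)}|_{S_j}\bigr),
\]
so the total seed length is $r=k r_0$.

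The analysis is the standard hybrid argument. Suppose a degree-$d$ polynomial $f$ distinguishes $G(U_r)$ from $U_n$ with advantage $\eps$. Then some hybrid index $j$ gives advantage $\eps/n$ in predicting $y_j$ from $y_1,\ldots,y_{j-1}$ via $f$ with the remaining bits uniform. The predictor has the form $f(y_1,\dots,y_{j-1},b,u)+b+c$. We fix the uniform suffix $u$, and for each block $i$ we fix the seed bits of $s^{(i)}$ outside $S_j$; this averaging preserves the advantage. After the fixing, each earlier output $y_{j'}$ is the XOR over blocks of a function of at most $\log n$ free variables of the respective block. Each such function is therefore a polynomial of degree at most $\log n$, and because the XOR over blocks is a sum, $y_{j'}$ also has degree at most $\log n$. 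This is the point of using separate seeds: the degree stays $\log n$ rather than becoming $k\log n$. Composing with $f$ gives a polynomial $p$ of degree at most $D=d\log n$ in the $k\ell$ free variables, whose correlation with $\XMajkl$ is at least $\eps/n$.

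We then apply \Cref{thm:total-degree} with $D=d\log n$. Taking $\ell=\lceil 16 d^2\log^2 n\rceil$ makes $2D/\sqrt\ell\le 1/2$, and $k=\lceil\log(n/\eps)\rceil+1$ then forces $(1/2)^k<\eps/n$, a contradiction. The seed length is
\[
r=k\cdot O(\ell^2/\log n)=O\bigl(d^4\log^3 n\cdot\log(n/\eps)\bigr),
\]
as claimed.

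The main obstacle is the bookkeeping in the restriction step: checking that the fixed outside-of-$S_j$ bits really leave each $y_{j'}$ depending on at most $\log n$ bits per block, and that these are the same variables on which the hard instance of $\XMajkl$ lives. We also need to confirm the explicit design parameters, $r_0=O(\ell^2/\log n)$ with intersections at most $\log n$, and handle rounding so that $\ell$ is odd.
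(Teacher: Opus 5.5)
Your proposal is correct and follows essentially the same route as the paper. Both use one NW design reused with $k$ independent seeds, and both observe that after fixing, each earlier output is an XOR of per-block functions of at most $\log n$ free bits, so it has degree at most $\log n$; both then apply \Cref{thm:total-degree} with degree $d\log n$, $\ell=\Theta(d^2\log^2 n)$ odd and $k=\Theta(\log(n/\eps))$.

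The differences are cosmetic. The paper bounds each hybrid step directly as $\frac14\E_x[(-1)^{F(x)+Q_1(x)}-(-1)^{F(x)+Q_0(x)}]$ rather than through a next-bit predictor; in your version you should also fix the random bit $b$ by averaging. Your opening sentence's $\ell=\Theta(d^2\log^2(n/\eps))$ should read $\Theta(d^2\log^2 n)$, matching your later concrete choice.
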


\paragraph{Construction.} Set $t=\lceil\log(2n)\rceil$. For any $\ell\ge t$, we can construct
sets $S_1,\ldots,S_n\subseteq[R]$ in time polynomial in $n$ and
$\ell$, where
\[
 |S_i|=\ell,\qquad |S_i\cap S_j|<t\quad(i\ne j),\qquad
 R=O(\ell^2/t).
\]
(See, for example, \cite[Lemma~4.5]{DBLP:journals/eccc/HatamiH23}.)
 
Choose an odd $\ell\ge16d^2t^2$ with $\ell=O(d^2t^2)$, and set
$k=\lceil\log(2n/\varepsilon)\rceil$.
Using $k$ independent seeds $z^{(1)},\ldots,z^{(k)}\in\zo^R$, define
\[
 G(z^{(1)},\ldots,z^{(k)})_i
 =\bigoplus_{a=1}^k\Maj(z^{(a)}_{S_i}),\qquad i\in[n].
\]
Here $z^{(a)}_{S_i}$ lists the coordinates in increasing order.
The seed length is %
\[
 kR=O(k\ell^2/t)
 =O\bigl(d^4\log^3(n)\log(n/\varepsilon)\bigr).
\]

\paragraph{Analysis.}
Fix a degree-$d$ polynomial $p$.
For $0\le i\le n$, let $Y_i$ be a random vector in $\zo^n$ consisting of the first $i$ outputs of $G$
followed by $n-i$ independent uniform bits. Thus $Y_0$ is uniform
and $Y_n$ is the output of $G$.
To compare $Y_i$ and $Y_{i-1}$, fix the seed bits outside $S_i$ in
each of the $k$ seeds, and fix the last $n-i$ output bits.
The remaining seed bits form $k$ independent uniform blocks
$x^{(1)},\ldots,x^{(k)}\in\zo^\ell$. Write
$F(x)=\XMaj(x)$ for the $i$-th output.

For $j<i$, the $j$-th output is now of the form
$q_j(x)=\bigoplus_{a=1}^k q_{j,a}(x^{(a)})$, where $q_{j,a}$ depends
on fewer than $t$ variables, since $|S_j\cap S_i|<t$.
Note that $\deg q_j\le t$, since
the XOR across the blocks is addition over $\F$, so it does not add
their degrees.
For $b\in\zo$, let $Q_b(x)$ be the result of substituting these
earlier outputs, the bit $b$, and the fixed later bits into $p$.
Both $Q_0$ and $Q_1$ have total degree at most $dt$.

Treating Boolean outputs as real numbers, the conditional difference
between the two acceptance probabilities is
\begin{align*}
 \E_x\left[Q_{F(x)}(x)-\frac{Q_0(x)+Q_1(x)}2\right]
 &=\frac14\E_x\left[
 (-1)^{F(x)+Q_1(x)}-(-1)^{F(x)+Q_0(x)}\right].
\end{align*}
By \cref{thm:total-degree}, its absolute value is at most
\[
 \frac12\left(\frac{2dt}{\sqrt\ell}\right)^k
 \le 2^{-k-1}\le\frac{\varepsilon}{4n}.
\]
This holds for every fixing. Averaging over the fixed bits and
summing over the $n$ hybrids proves \cref{thm:nw-prg}.

\paragraph{Remark.} The degree bound in the hybrid argument is the reason for applying
NW separately in each majority block. Although $q_j$ may depend on
$kt$ variables, it is an XOR of functions of at most $t$ variables,
so its degree is at most $t$. Thus we pay for $k$ independent seeds
of length $O(\ell^2/t)$, while the polynomials $Q_b$ still have degree
at most $dt$.
If instead we apply NW directly to the entire $k\ell$-bit hard
function, using sets of size $k\ell$ with intersections of size at
most $t$, the same correlation estimate requires
$\ell=\Theta(d^2t^2)$ and $k=\Theta(\log(n/\varepsilon))$.
The design then gives seed length  $O((k\ell)^2/t)
 =O\bigl(d^4\log^3(n)\log^2(n/\varepsilon)\bigr)$. Using the XOR structure thus saves a factor of
$k=\Theta(\log(n/\varepsilon))$.

\section{Fourier Bounds for Polynomials and an alternate PRG}\label{appendixB}

Our correlation bound resolves a conjectured Fourier bound for polynomials
\cite[Conjecture~31]{ChattopadhyayGLLS21}.
As a consequence, using a result from  \cite{ChattopadhyayGLLS21}, based on the frameworks of polarizing random
walks \cite{DBLP:journals/toc/ChattopadhyayHH19}, we derive an alternative pseudorandom generator for polynomials.

For $p\colon\F^n\to\F$, define its sign function on $\{-1,1\}^n$ by
$f(x)=(-1)^{p((1-x_1)/2,\ldots,(1-x_n)/2)}$.
Write its real Fourier expansion as
$f(x)=\sum_{S\subseteq[n]}\widehat f(S)x_S$, where
$x_S=\prod_{i\in S}x_i$ and
$\widehat f(S)=\E_x[f(x)x_S]$ for uniform $x\in\{-1,1\}^n$.
Define
\[
 M_k(f)=\max_{x\in\{-1,1\}^n}
 \Bigl|\sum_{|S|=k}\widehat f(S)x_S\Bigr|.
\]
Thus the absolute value is taken after summing the level-$k$
coefficients with the signs specified by $x$.
Let $\mathcal P_{n,d}$ be the class of degree-$d$
polynomials on $n$ variables over $\F$, and set
$M_k(\mathcal P_{n,d})=\max_{p \in\mathcal P_{n,d}}M_k((-1)^p)$.

\begin{theorem}\label{thm:fourier-mk}
For $1\le k\le n$ and $d\ge1$,
\[
 M_k(\mathcal P_{n,d})\le\frac{k^k}{k!}(2d)^k\le(2ed)^k.
\]
\end{theorem}

\begin{proof}
Note that negating input coordinates multiplies $\widehat f(S)$ by the
corresponding product of signs. Thus it suffices to bound
$|\sum_{|S|=k}\widehat f(S)|$, where  $f$ is the sign function of an arbitrary $p\in\mathcal P_{n,d}$. 
Fix a partition $B_1,\ldots,B_k$ of $[n]$, allowing empty parts.
Pad each part with fresh variables to the same odd length $\ell$,
and let $f$ ignore these additional variables.
On $\{-1,+1\}$, majority is the function
$h_\ell(y)=\operatorname{sgn}(y_1+\cdots+y_\ell)$.
Using the exact constant in the proof of \cref{thm:total-degree} gives
\begin{equation}\label{eq:padded-majority-correlation}
 \left|\E\left[f(x)\prod_{a=1}^k h_\ell(x^{(a)})\right]\right|
 \le\left(2\sqrt{\frac2\pi}\,\frac d{\sqrt\ell}\right)^k,
\end{equation}
where $x^{(a)}$ contains the variables of $B_a$ and its padding.

We recover the sum of coefficients that use one variable from each
part by letting $\ell$ grow. We prove this as follows. Write
$\ell=2m+1$ and
$a_\ell=\widehat h_\ell(\{1\})=2^{-2m}\binom{2m}{m}$.
The even-level Fourier coefficients of $h_\ell$ vanish, and for odd $|T|=2j+1$,
\[
 \widehat h_\ell(T)
 =(-1)^j a_\ell\frac{\binom mj}{\binom{2m}{2j}}.
\]
Indeed, differentiating in one coordinate of $T$ gives the indicator
that the other $2m$ signs sum to zero. On this event, the expected
product of any $2j$ coordinates is
$(-1)^j\binom mj/\binom{2m}{2j}$, as follows by expanding
$(1+z)^m(1-z)^m$.
Thus $\sqrt\ell\,a_\ell\to\sqrt{2/\pi}$, and for fixed
odd $|T|>1$ we have $\widehat h_\ell(T)/a_\ell\to0$.

Expanding in the Fourier basis and dividing by $a_\ell^k$ gives %
\[
 \frac{1}{a_\ell^k}
 \E\left[f(x)\prod_{a=1}^k h_\ell(x^{(a)})\right]
 =
 \sum_{S\subseteq[n]}\widehat f(S)
 \prod_{a=1}^k
 \frac{\widehat h_\ell(S\cap B_a)}{a_\ell}.
\]
If $S$ misses a part, the corresponding factor is zero because
majority has mean zero. Otherwise, each factor is $1$ when
$|S\cap B_a|=1$, and tends to zero when $|S\cap B_a|>1$.
Since the sum has only $2^n$ terms, taking $\ell\to\infty$ leaves
exactly those sets containing one variable from each part.
By \cref{eq:padded-majority-correlation} and
$\sqrt{\ell}\,a_\ell\to\sqrt{2/\pi}$, we obtain 
\[
 \left|\sum_{S:\,|S\cap B_a|=1\ \text{for every }a}
          \widehat f(S)\right|\le(2d)^k.
\]
Now assign each coordinate independently and uniformly to one of the
$k$ parts. Every $k$-element set meets all parts with probability
$k!/k^k$. Averaging the last inequality gives
\[
 \frac{k!}{k^k}\left|\sum_{|S|=k}\widehat f(S)\right|
 \le(2d)^k.\qedhere
\]
 \end{proof}

\paragraph{PRG consequence.}
We recall the PRG result that we need
from~\cite{ChattopadhyayGLLS21}.
If a class $\mathcal F$ of sign functions on $n$ variables is closed
under restrictions and satisfies $M_k(\mathcal F)\le b^k$ for
$k\ge3$ and $1\le b\le n$, then it has an explicit PRG of error
$\varepsilon$ with seed length
\begin{equation*}\label{eq:mk-prg-general}
 O\left(b^2 k\log(n)\log(n/\varepsilon)
       \left(\frac{b^2\log(n/\varepsilon)}{\varepsilon}\right)^{2/(k-2)}
   \right).
\end{equation*}
 
 Using the Fourier bounds proved above, taking $b=2 e d$ and $k=O\left(\log\left(\frac{b^2\log(n/\varepsilon)}{\varepsilon}\right)\right)$, we obtain the following result.

\begin{corollary}\label{thm:mk-prg}
There is an explicit PRG that
$\varepsilon$-fools degree-$d$ polynomials over $\F$ on $n$ variables
with seed length
\begin{equation*}\label{eq:mk-prg-seed}
 O\left(d^2\log(n)\log(n/\varepsilon)
         \log\left(\frac{2d^2\log(n/\varepsilon)}{\varepsilon}\right)\right).
\end{equation*}
 \end{corollary}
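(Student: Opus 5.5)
The corollary follows by plugging the Fourier bound of \cref{thm:fourier-mk} into the PRG result of \cite{ChattopadhyayGLLS21} recalled above. First I check that the class $\mathcal P_{n,d}$ of sign functions of degree-$d$ polynomials satisfies its hypotheses. It is closed under restrictions: fixing variables of a degree-$d$ polynomial to constants gives a polynomial of degree at most $d$ in the remaining variables. Padding with dummy variables keeps the number of variables equal to $n$ if needed. By \cref{thm:fourier-mk}, $M_k(\mathcal P_{n,d})\le (2ed)^k$ for every $k\ge 1$, so we may take $b=2ed$. The requirement $1\le b\le n$ holds whenever $2ed\le n$. Otherwise $d>n/(2e)$, and the claimed seed length is already $\Omega(n)$, so the trivial generator (output the seed) suffices.

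Next I choose $k$ to kill the last factor. Set $L=\log\bigl(b^2\log(n/\varepsilon)/\varepsilon\bigr)$ and $k=2+\lceil 2L\rceil$, which is at least $3$. Then
\[
\left(\frac{b^2\log(n/\varepsilon)}{\varepsilon}\right)^{2/(k-2)}=2^{2L/(k-2)}\le 2.
\]
The seed length therefore becomes $O\bigl(b^2 k\log(n)\log(n/\varepsilon)\bigr)=O\bigl(d^2 L\log(n)\log(n/\varepsilon)\bigr)$.

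Finally I rewrite $L$ in the stated form. Since $b^2=4e^2d^2$, we have $L=\log\bigl(4e^2 d^2\log(n/\varepsilon)/\varepsilon\bigr)\le \log(2e^2)+\log\bigl(2d^2\log(n/\varepsilon)/\varepsilon\bigr)$. Because $\varepsilon\le 1/2$ and $d\ge1$, the argument $2d^2\log(n/\varepsilon)/\varepsilon$ is at least $4$, so the additive constant $\log(2e^2)$ is absorbed into the $O(\cdot)$.

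There is no real obstacle here. The only points to watch are the side conditions of the black-box theorem ($k\ge3$, $1\le b\le n$, closure under restrictions), together with the routine check that the exponent choice makes the last factor constant.
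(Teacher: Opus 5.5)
Your proposal is correct and follows the paper's argument exactly. Like the paper, you instantiate the recalled result of \cite{ChattopadhyayGLLS21} with $b=2ed$ from \cref{thm:fourier-mk} and $k=\Theta\bigl(\log(b^2\log(n/\varepsilon)/\varepsilon)\bigr)$, so that the last factor becomes a constant; your explicit checks of the side conditions ($k\ge 3$, closure under restrictions, and the case $2ed>n$ via the trivial generator) fill in details the paper leaves implicit.
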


\section{\texorpdfstring{PRG for $\mathsf{AC}^0[\oplus]$}{PRGs for AC0[Parity]}}\label{appendixC}

Using Razborov's polynomial approximation of $\mathsf{AC}^0[\oplus]$, we easily derive the following PRGs for $\mathsf{AC}^0[\oplus]$ using our results on fooling polynomials.

\begin{theorem}\label{thm:circuit-prgs}
Fix a depth $\Delta\ge1$. There are explicit PRGs
that $\varepsilon$-fool size-$s$, depth-$\Delta$
$\mathsf{AC}^0[\oplus]$ circuits on $n$ inputs with the following
seed lengths. The Nisan--Wigderson construction gives
\begin{equation*}\label{eq:nw-circuit-seed}
 O\bigl(\log^{4\Delta-1}(s)\log^4(1/\varepsilon)
          \log(s/\varepsilon)\bigr),
\end{equation*}
and the construction from Fourier bounds gives
\begin{equation*}\label{eq:mk-circuit-seed}
 O\left(\log^{2\Delta-1}(s)\log^2(1/\varepsilon)
          \log(s/\varepsilon)
          \log\left(\frac{\log(s)}{\varepsilon}\right)\right).
\end{equation*}
 \end{theorem}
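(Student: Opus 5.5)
We would prove \cref{thm:circuit-prgs} by reduction to the polynomial PRGs of \cref{thm:nw-prg} and \cref{thm:mk-prg}, using Razborov's probabilistic polynomials. The plan is to first show that every size-$s$, depth-$\Delta$ $\mathsf{AC}^0[\oplus]$ circuit $C$ is $\varepsilon$-approximated, under \emph{both} the uniform distribution and the output distribution of our generator, by a low-degree $\F$-polynomial. Fooling the polynomial then transfers to fooling $C$.

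\textbf{Step 1: probabilistic polynomial.} Razborov's construction gives a distribution $\mathbf p$ over $\F$-polynomials of degree
\[
D=O\bigl(\log(s/\delta)\bigr)^{\Delta}
\]
such that for every fixed input $x$, $\Pr_{\mathbf p}[\mathbf p(x)\ne C(x)]\le\delta$. Parity gates are exact. Each OR/AND gate of fan-in at most $s$ is replaced by $1-\prod_{j\le u}(1+\langle r_j,y\rangle)$ with $u=O(\log(s/\delta))$ independent random vectors $r_j$, which errs with probability at most $\delta/s$ per gate. A union bound over the $s$ gates then gives total error $\delta$, and the degrees multiply across the $\Delta$ layers.

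\textbf{Step 2: transfer to a fooling statement.} Let $G$ be either generator and $Y=G(U_r)$. For $Z\in\{U_n,Y\}$ we have $\E_{\mathbf p}\Pr_Z[\mathbf p(Z)\ne C(Z)]\le\delta$. By Markov's inequality and a union bound over the two distributions, some fixed $p$ in the support errs with probability at most $3\delta$ on both $U_n$ and $Y$. Hence
\[
|\Pr[C(Y)=1]-\Pr[C(U_n)=1]|\le 6\delta+|\Pr[p(Y)=1]-\Pr[p(U_n)=1]|.
\]
The key point is that the generator must be fixed \emph{before} $p$ is chosen; this works because each generator fools all degree-$D$ polynomials simultaneously. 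Set $\delta=\varepsilon/12$ and instantiate the generator with degree $D$ and error $\varepsilon/2$.

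\textbf{Step 3: parameter substitution.} This is the only real bookkeeping, and the step where the stated exponents could go wrong. Take $D=O(\log s\cdot\log(1/\varepsilon))^{\Delta}$, or more carefully use $\log(s/\varepsilon)$ only in the top gate and $\log s$ below.

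For the Nisan--Wigderson generator, \cref{thm:nw-prg} gives seed length $O(D^4\log^3 n\,\log(n/\varepsilon))$ with $n\le s$. To reach the claimed $\log^{4\Delta-1}(s)\log^4(1/\varepsilon)$, the $1/\varepsilon$-dependence must enter only one layer. The plan is to apply the standard refinement: use error $1/\mathrm{poly}(s)$ for the internal gates, giving degree $O(\log s)$ each. At the output gate, use an exact-or-approximate treatment whose degree is $O(\log(1/\varepsilon))$. This yields $D=O(\log^{\Delta-1}(s)\log(1/\varepsilon))$, so $D^4\log^3 s=\log^{4\Delta-1}(s)\log^4(1/\varepsilon)$, matching the claim.

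For the Fourier-bound generator, \cref{thm:mk-prg} gives $O(D^2\log n\log(n/\varepsilon)\log(D^2\log(n/\varepsilon)/\varepsilon))$. Here $D^2\log n=\log^{2\Delta-1}(s)\log^2(1/\varepsilon)$, and the last logarithm simplifies to $O(\log(\log s/\varepsilon))$ absorbed up to constants depending on $\Delta$. The main obstacle is making the Step 1 degree accounting achieve exactly $D=O(\log^{\Delta-1}s\cdot\log(1/\varepsilon))$. The union-bound error must be split so that only the top gate pays $\log(1/\varepsilon)$, while the lower gates pay $\log(s/\delta')$ with $\delta'=1/\mathrm{poly}(s)$. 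This requires checking that $\log(s/\varepsilon)$ terms in the lower layers can be bounded by $\log s\cdot\log(1/\varepsilon)$.
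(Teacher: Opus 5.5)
Your overall architecture matches the paper's proof. You approximate $C$ pointwise by a Razborov probabilistic polynomial. You use that the generator fools every degree-$D$ polynomial simultaneously; your Markov-based choice of a single good $p$ is equivalent to the paper's direct averaging over $p$ with error $\varepsilon/3$ per term. You then substitute $D$ into \cref{thm:nw-prg} and \cref{thm:mk-prg}, and your exponent bookkeeping is correct \emph{given} $D=O(\log^{\Delta-1}(s)\log(1/\varepsilon))$.

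The gap is in how you propose to obtain that degree. If the internal gates are approximated with error $1/\mathrm{poly}(s)$, the union bound only guarantees total error about $1/\mathrm{poly}(s)+\varepsilon$. Errors at internal gates propagate to the output, so no treatment of the output gate alone can remove this term. The total error therefore exceeds $\varepsilon$ once $\varepsilon$ is smaller than every inverse polynomial in $s$ (e.g.\ $\varepsilon=2^{-\log^2 s}$). That is exactly the regime where the separate $\log^4(1/\varepsilon)$ and $\log^2(1/\varepsilon)$ factors matter. Using per-gate error $\varepsilon/s$ instead forces degree $\log(s/\varepsilon)$ in every layer. The inequality $\log(s/\varepsilon)\le 2\log s\cdot\log(1/\varepsilon)$ that you propose to check then only yields $D=O(\log^{\Delta-1}(s)\log^{\Delta}(1/\varepsilon))$, not linear dependence on $\log(1/\varepsilon)$.

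The missing idea is error reduction by majority vote, which over $\F$ costs only a multiplicative factor in degree. First build a probabilistic polynomial with \emph{constant} error. Approximate every non-output AND/OR gate with error $1/(10s)$, which costs degree $O(\log s)$ per layer, and the union bound gives failure at most $1/10$. Approximate the output gate with error $1/16$. This gives degree $O(\log s)^{\Delta-1}$ and pointwise error below $1/5$. Then take $t=O(\log(1/\varepsilon))$ independent copies $\mathbf p_1,\dots,\mathbf p_t$ and output $\Maj_t(\mathbf p_1,\dots,\mathbf p_t)$. Since $\Maj_t$ is itself an $\F$-polynomial of degree at most $t$, the result has degree $O(\log s)^{\Delta-1}\cdot t=O_\Delta(\log^{\Delta-1}(s)\log(1/\varepsilon))$. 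By a Chernoff bound over the independent copies, its pointwise error is $2^{-\Omega(t)}\le\varepsilon/3$. The paper simply takes this degree bound as given by Razborov's method; with it in hand, the rest of your argument goes through.
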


\begin{proof}
Fix such a circuit $C$. Razborov's polynomial
approximation method~\cite{Raz87} gives a distribution of polynomials of degree
$d=O(\log^{\Delta-1}(s)\log(1/\varepsilon))$ such that, for every
input $x$, a sampled polynomial $p$ satisfies
$\Pr_p[p(x)\ne C(x)]\le\varepsilon/3$.

Let $G$ be a PRG that $(\varepsilon/3)$-fools degree-$d$ polynomials.
The pointwise approximation guarantee holds under both the uniform
distribution and the generator. Consequently,
\begin{align*}
 &\left|\Pr[C(G(U_r))=1]-\Pr[C(U_n)=1]\right|\\
 &\quad\le
 \E_p\Pr[p(G(U_r))\ne C(G(U_r))]
 +\E_p\left|\Pr[p(G(U_r))=1]-\Pr[p(U_n)=1]\right|\\
 &\qquad\quad
 +\E_p\Pr[p(U_n)\ne C(U_n)]
 \le\varepsilon.
\end{align*}
 Thus both the PRGs constructed in previous sections also fool $\mathsf{AC}^0[\oplus]$ circuits.
\end{proof}
\end{document}